\numberwithin{equation}{section}
\newtheorem{theorem}{Theorem}[section]
\newtheorem{lemma}[theorem]{Lemma}
\newtheorem{proposition}[theorem]{Proposition}
\newtheorem{corollary}[theorem]{Corollary}
\theoremstyle{definition}
\theoremstyle{remark}
\newtheorem{remark}[theorem]{Remark}
\begin{document}

\title[A Variational Scalar Conformal Flow for Lorentz-Contracted Geometry]
{A Variational Scalar Conformal Flow for Lorentz-Contracted Geometry: Algebraic Decay and Canonical Normalization}

\author{Anton Alexa}
\address{Independent Researcher, Chernivtsi, Ukraine}
\email{mail@antonalexa.com}
\date{\today}

\begin{abstract}
    We introduce the scalar function \( C(v) = \pi(1 - v^2/c^2) \), defined as a scalar conformal factor induced by the longitudinal Lorentz contraction of spatial geometry. Extending \( C(v) \) to a one-parameter flow \( C(v,\tau) \), we construct a variational scalar conformal flow that drives the conformal factor toward the equilibrium \( C = \pi \) without singularities. The main result is an explicit algebraic decay law for the energy functional: \( E(\tau) \sim \tau^{-1/2} \) for generic initial data and \( E(\tau) \sim \tau^{-5/2} \) for the physical initial condition \( C(v,0)=\pi(1-v^2/c^2) \). More generally, if the initial deviation vanishes as \( v^n \) near \( v=0 \), then \( E(\tau)\sim \tau^{-(2n+1)/2} \). This behavior is explained by the gapless continuous spectrum of the relaxation operator, whose spectral measure satisfies \( d\mu(k)\sim k^{-1/2}dk \) near \( k=0 \). As an application, within the conformally homogeneous class of compact, simply-connected 3-manifolds with constant positive background curvature, the flow acts as a canonical normalization mechanism selecting \( C=\pi \) as the unique conformal representative whose curvature invariants agree with those of the unit \( S^3 \).
\end{abstract}

\maketitle

\noindent\textbf{Keywords:} variational scalar flow, conformal factor, algebraic decay, spectral relaxation, canonical normalization, space-form rigidity.

\noindent\textbf{MSC 2020:} 53E20, 53C20; 57K30, 83A05.

\vspace{0.5cm}
\thispagestyle{empty}

\section{INTRODUCTION}\label{sec:intro}
\vspace{-0.2em}
\indent
In classical Riemannian geometry and general relativity \cite{Einstein1915,CarrollSpacetime}, the ratio of circumference to diameter of a geodesic circle is a constant determined by the local curvature alone, independent of any observer or kinematic state. Under Lorentz contraction \cite{CarrollSpacetime}, however, a rest-frame circle of radius \( R \) moving at velocity \( v \) appears in the lab frame as an ellipse with semi-axes \( R/\gamma \) and \( R \), where \( \gamma = (1-v^2/c^2)^{-1/2} \) is the Lorentz factor. The effective metric scale of this deformed geometry is no longer a universal constant but depends explicitly on \( v \). This paper introduces and studies the scalar function
\begin{equation}
    C(v) = \pi\!\left(1 - \frac{v^2}{c^2}\right), \qquad |v| \leq c,
\end{equation}
which encodes this velocity-dependent deformation of local metric geometry, and constructs a variational scalar conformal flow whose dynamics and decay rates are the central object of investigation.

Within the present model, \( C(v) \) is introduced as a scalar conformal factor motivated by the longitudinal Lorentz contraction of the deformed geometry. The longitudinal semi-axis of the deformed ellipse is \( a = R/\gamma \), giving an effective metric component \( g_{xx}^{\mathrm{eff}}(v) = (a/R)^2 = \gamma^{-2} = 1 - v^2/c^2 \), and under the chosen normalization \( C(v) \) is identified with \( \pi \cdot g_{xx}^{\mathrm{eff}}(v) \) (Proposition~\ref{prop:lorentz-metric}). Within this normalization, the condition \( C(c) = 0 \) follows from the degeneration of the longitudinal metric at \( v = c \). Within the minimal even quadratic class satisfying \( C(0) = \pi \) and \( C(c) = 0 \), this quadratic form is the unique solution (Lemma~\ref{lemma:cv-quadratic-uniqueness}). In this sense \( C(v) \) serves as a velocity-dependent conformal factor in metric-derived geometric formulas, reproducing the classical value \( \pi \) at rest.

The present work provides an explicit classical scalar model of such kinematic deformation and develops its dynamical and geometric consequences.

The function \( C(v) \) determines a critical velocity \( v_c = c\sqrt{1-1/\pi} \approx 0.8257\,c \) that has two geometric interpretations within the model: it is simultaneously the unique velocity at which the deformed metric coincides with the background metric, \( g_{ij}(v_c) = g_{ij}^0 \), and the threshold separating the two dynamical regimes of the scalar flow constructed below (Section~\ref{sec:critical}). Extending \( C(v) \) to a one-parameter family \( C(v, \tau) \), where \( \tau \) is a dimensionless relaxation parameter, we define a scalar conformal evolution \( g_{ij}(\tau) = C(v,\tau)\,g_{ij}^0 \) governed by a variational energy functional. The principal result of this paper concerns the dynamics of this flow:

\begin{theorem}[Main Result]
\label{thm:main}
Let \( C(v,\tau) \) evolve under the gradient flow of the energy functional
\( E(\tau) = \int_{-v_c}^{v_c}(C(v,\tau) - \pi)^2\,dv \),
\begin{equation}
    \frac{\partial C}{\partial \tau} = -\alpha\frac{v^2}{c^2}\bigl(C(v,\tau) - \pi\bigr),
    \qquad v \in (-v_c,\, v_c),\quad \alpha > 0,
\end{equation}
derived from the variational potential \( V(C,v) = \tfrac{1}{2}\alpha(v^2/c^2)(C-\pi)^2 \). Then\/\emph{:}
\begin{enumerate}
    \item[\emph{(i)}] \( E(\tau) \) is monotonically non-increasing, \( C(v,\tau) \to \pi \)
    in \( L^2 \) as \( \tau \to \infty \), and the flow is singularity-free for all \( \tau \geq 0 \).
    \item[\emph{(ii)}] For constant initial data \( C(v,0) - \pi = c_0 \),
    \begin{equation}
        E(\tau) = c_0^2\,c\sqrt{\tfrac{\pi}{2\alpha\tau}}\,
        \operatorname{erf}\!\bigl(v_c\sqrt{2\alpha\tau/c^2}\bigr)
        \;\sim\; c_0^2\,c\sqrt{\tfrac{\pi}{2\alpha}}\;\tau^{-1/2}.
    \end{equation}
    \item[\emph{(iii)}] For the physical initial condition \( C(v,0) = C(v) = \pi(1-v^2/c^2) \),
    \begin{equation}
        E(\tau) \;\sim\; \frac{3\pi^{5/2}}{4}\cdot\frac{c^5}{(2\alpha)^{5/2}}\cdot\tau^{-5/2}.
    \end{equation}
    \item[\emph{(iv)}] The algebraic decay in \emph{(ii)--(iii)} reflects the gapless continuous
    spectrum of the relaxation operator \( k(v) = \alpha v^2/c^2 \), whose spectral measure
    satisfies \( d\mu(k) \sim k^{-1/2}\,dk \) near \( k = 0 \). For initial data vanishing as
    \( v^n \) near \( v=0 \), the general decay law is \( E(\tau) \sim \tau^{-(2n+1)/2} \).
\end{enumerate}
\end{theorem}

The algebraic, rather than exponential, decay is the key dynamical feature of the flow: the velocity domain contains modes near \( v = 0 \) with arbitrarily slow relaxation rates \( k(v) \to 0 \), and these modes prevent the exponential convergence familiar from flows with a spectral gap. This is the exact analogue of heat kernel decay on \( \mathbb{R} \), and the decay exponent \( (2n+1)/2 \) is controlled by the vanishing order of the initial data at \( v = 0 \). In particular, the physical initial condition \( C(v,0) = C(v) \), which has \( n = 2 \), achieves a significantly faster decay than a generic perturbation (\( n = 0 \)).

As an application of the scalar flow framework, we establish a canonical normalization result for compact 3-manifolds. Within the class of compact, simply-connected 3-manifolds with spatially homogeneous conformal factor (\( \nabla_i C = 0 \)) and constant positive background curvature --- already identified as spherical space forms by the Killing--Hopf theorem \cite{Besse,Petersen} --- the flow selects \( C = \pi \) as the unique metric representative for which the curvature invariants \( (I_1,I_2,I_3) = (12\pi^2,72\pi^2,24\pi^2) \) match those of the unit \( S^3 \) (Theorems~\ref{thm:sufficiency} and~\ref{thm:convergence}). The contribution of the flow is not to determine the topology --- which is pre-determined by the assumptions --- but to provide a canonical normalization mechanism: among all conformal rescalings \( g_{ij} = C\,g_{ij}^0 \) consistent with the background structure, the dynamics uniquely select \( C = \pi \) as the canonical representative.

The connection to Hamilton's Ricci flow \cite{Hamilton1982} (see also \cite{ChowLuNi,Topping2006,Perelman2002}) is made precise in Theorem~\ref{thm:conf-reduction}: under the conformally homogeneous ansatz with constant-curvature background, the full tensorial Ricci flow reduces to the scalar equation \( \dot{C} = -2k/C \), and linearization near \( C = \pi \) recovers the variational flow above. The scalar flow of Theorem~\ref{thm:main} is therefore a scalar conformal analogue of Ricci-type evolution within the conformally homogeneous ansatz, derived independently via a variational energy functional.

\section{DERIVATION OF FUNCTION $C(v)$}\label{sec:derivation}
\vspace{-0.2em}
\indent
\subsubsection*{Geometric Motivation}
To understand how motion modifies geometry, we begin with the simplest local structure: a circle of radius \( R \) at rest in Euclidean space. Its classical length is
\begin{equation}
    L_0 = 2\pi R.
\end{equation}

Now consider this circle moving with constant velocity \( v \) along the \( x \)-axis. Due to Lorentz contraction \cite{CarrollSpacetime}, the radius along the direction of motion becomes
\begin{equation}
    R_x = R \sqrt{1 - \frac{v^2}{c^2}}, \qquad R_y = R,
\end{equation}
while the transverse direction remains unaffected. The circle thus deforms into an ellipse with semi-axes \( R_x \) and \( R_y \).

\indent
Parameterizing this ellipse by angle \( \theta \), the coordinates are:
\begin{equation}
    x = R_x \cos \theta, \qquad y = R_y \sin \theta,
\end{equation}
with differentials:
\begin{align}
    dx &= -R_x \sin \theta \, d\theta, \\
    dy &=  R_y \cos \theta \, d\theta.
\end{align}

The differential arc length becomes
\begin{equation}
    ds = \sqrt{dx^2 + dy^2} = R \sqrt{(1 - \tfrac{v^2}{c^2}) \sin^2\theta + \cos^2 \theta} \, d\theta.
\end{equation}

Integrating over the full parameter range \( \theta \in [0, 2\pi] \), the total perimeter of the deformed figure is
\begin{equation}
    L(v) = R \int_0^{2\pi} \sqrt{1 - \tfrac{v^2}{c^2} \sin^2 \theta} \, d\theta.
\end{equation}

This integral corresponds to the complete elliptic integral of the second kind and does not yield a closed-form expression. However, expanding under small \( v \), we use
\begin{equation}
    \sqrt{1 - \epsilon \sin^2 \theta} \approx 1 - \tfrac{1}{2} \epsilon \sin^2 \theta, \quad \epsilon = \frac{v^2}{c^2},
\end{equation}
leading to
\begin{align}
    L(v) &\approx R \int_0^{2\pi} \left(1 - \tfrac{1}{2} \epsilon \sin^2 \theta \right) d\theta \\
         &= R \left[ 2\pi - \tfrac{1}{2} \epsilon \cdot \pi \right] = 2\pi R \left(1 - \tfrac{1}{4} \frac{v^2}{c^2} \right).
\end{align}

This gives a first-order approximation of relativistic contraction for a circular contour. Motivated by this perimeter-to-diameter comparison, we introduce a scalar function \( C(v) \) as an effective conformal descriptor of the deformation, normalized so that \( C(0)=\pi \). At this heuristic stage, \( C(v) \) should be viewed as a scalar quantity suggested by the circular example; its precise interpretation as a conformal scaling factor is established below.

\subsubsection*{Choice of the Scalar Factor}
We now ask: what is the minimal analytic form of \( C(v) \) that satisfies a set of physically motivated conditions. The function must recover classical geometry at rest, i.e., \( C(0) = \pi \); it must vanish at the speed of light, \( C(c) = 0 \), representing complete compression; it should be even in \( v \), since only the speed magnitude is physically meaningful, i.e., \( C(-v) = C(v) \); and it should vary smoothly and monotonically on the interval \( v \in [0, c) \). 

Under these constraints, the first-order expansion shows that the dependence on \( v \) is quadratic; imposing the boundary condition \( C(c) = 0 \) then determines the coefficient uniquely. Among all expressions of the form \( \pi(1 - \alpha v^2/c^2) \), only \( \alpha = 1 \) satisfies both \( C(0) = \pi \) and \( C(c) = 0 \) simultaneously, yielding:
\begin{equation}
    C(v) = \pi \left(1 - \frac{v^2}{c^2} \right),
\end{equation}
as the unique minimal analytic expression satisfying all conditions. This identification is confirmed by multiple independent derivations (Section~\ref{sec:altderiv}) and, crucially, is shown below to admit an exact identification under the chosen normalization: \( C(v) \) coincides with \( \pi \) times the effective longitudinal metric component of the Lorentz-contracted geometry (Proposition~\ref{prop:lorentz-metric}). It reproduces the classical limit, compresses geometry continuously with increasing velocity, and retains symmetry. This function will serve as the scalar deformation parameter in what follows and is further justified below as the unique minimizer of a natural energy functional under the given boundary conditions (see Lemma~\ref{lemma:cv-quadratic-uniqueness}).

\subsubsection*{Comparison with the Exact Elliptic Integral}
The exact perimeter of the relativistically deformed ellipse is expressed via the complete elliptic integral of the second kind (see e.g.\ \cite{ByrdFriedman}):
\begin{equation}
    C_{\mathrm{exact}}(v) = \frac{L(v)}{2R} = 2\, E\!\left(\frac{v}{c}\right),
\end{equation}
where \( E(k) = \int_0^{\pi/2} \sqrt{1 - k^2 \sin^2\theta}\, d\theta \). At rest, this correctly yields \( C_{\mathrm{exact}}(0) = 2E(0) = \pi \). However, in the ultrarelativistic limit \( v \to c \), one obtains
\begin{equation}
    C_{\mathrm{exact}}(c) = 2\,E(1) = 2 \neq 0.
\end{equation}
This nonzero value arises because the degenerate ellipse reduces to a line segment of length \( 2R \) traversed twice, yielding a total arc length \( L(c) = 4R \). The exact elliptic integral therefore does not satisfy the imposed boundary condition \( C(c) = 0 \).

The discrepancy reflects the distinct roles of the two quantities. The function \( C_{\mathrm{exact}} \) computes the Euclidean arc length of the deformed figure in the lab frame: even at \( v = c \) the degenerate ellipse has nonzero perimeter as a coordinate curve. By contrast, \( C(v) \) is defined as a scalar conformal deformation factor encoding the effective metric scaling experienced by the moving geometry.

In special relativity, the spatial metric induced on the lab-frame equal-time hypersurface for a moving observer has its longitudinal component suppressed by the Lorentz factor: the spatial metric component along the direction of motion scales as \( \gamma^{-2} = 1 - v^2/c^2 \), while the transverse components remain unchanged. As \( v \to c \), this longitudinal component vanishes, and the induced spatial metric degenerates --- the geometry of any structure projected onto the lab frame collapses in the direction of motion. This degeneration is a physical fact of special relativity, not a modelling assumption.

The distinction between the two quantities can be made precise as follows. The function \( C_{\mathrm{exact}} \) answers the question: \emph{what is the coordinate arc length of the deformed curve?} This is a valid geometric question, but it is not the one relevant to conformal geometry. In a Riemannian setting, the conformal factor \( C \) in \( g_{ij} = C\,g_{ij}^0 \) encodes how the metric \emph{scales} relative to a background: it determines how lengths, areas, and volumes of the deformed geometry compare to those of the background, not how long a specific curve is in fixed coordinates. The relevant quantity is therefore the normalised scale factor \( (a/R)^2 = \gamma^{-2} \) --- the square ratio of the deformed longitudinal semi-axis to the rest radius --- which measures by how much the geometry has contracted in the longitudinal direction relative to rest. The function \( C_{\mathrm{exact}} \) does not capture this scaling; it integrates a coordinate arc length and returns a different answer (\( C_{\mathrm{exact}}(c) = 2 \neq 0 \)) precisely because it is insensitive to the collapse of the metric.

The choice of \( C(v) \) is not unique among all possible scalar descriptors of the deformed figure, but it is the minimal analytic scalar quantity consistent with the imposed symmetry, boundary, and conformal-scaling requirements developed below.

We now show that this identification is consistent with the Lorentz transformation: under the chosen normalization \( C(v) = \pi\, g_{xx}^{\mathrm{eff}}(v) \), the same quadratic form \( C(v) = \pi(1 - v^2/c^2) \) is obtained.

\begin{proposition}[Normalized Identification of \( C(v) \) from Longitudinal Lorentz Contraction]
\label{prop:lorentz-metric}
Let a circle of radius \( R \) be at rest in an inertial frame \( S' \) moving with velocity \( v \) along the \( x \)-axis relative to the lab frame \( S \). The spatial geometry of the circle as observed in \( S \) at a fixed coordinate time is an ellipse whose effective longitudinal metric component satisfies
\begin{equation}
    g_{xx}^{\mathrm{eff}}(v) = 1 - \frac{v^2}{c^2} = \gamma^{-2}, \qquad g_{yy}^{\mathrm{eff}}(v) = 1.
\end{equation}
The scalar conformal deformation factor satisfies
\begin{equation}
    C(v) = \pi \cdot g_{xx}^{\mathrm{eff}}(v) = \pi\!\left(1 - \frac{v^2}{c^2}\right)
\end{equation}
under the chosen normalization. In particular, \( C(c) = 0 \) is a consequence of \( g_{xx}^{\mathrm{eff}}(c) = 0 \), not an imposed condition.
\end{proposition}

\begin{proof}
The circle in \( S' \) has coordinates \( (x', y') = (R\cos\theta,\, R\sin\theta) \). Using the inverse Lorentz transform \cite{CarrollSpacetime}, the lab-frame coordinates of a point \( (x', y') \) at the simultaneous lab time \( t \) are:
\begin{equation}
    x = \frac{x'}{\gamma} + \gamma^2 v t, \qquad y = y'.
\end{equation}
Setting \( X = x - \gamma^2 v t \), the image of the circle in \( S \) at time \( t \) is:
\begin{equation}
    \frac{X^2}{(R/\gamma)^2} + \frac{y^2}{R^2} = 1,
\end{equation}
an ellipse with longitudinal semi-axis \( a = R/\gamma \) and transverse semi-axis \( b = R \). The effective metric components of this deformed geometry, normalised to the rest radius \( R \), are:
\begin{equation}
    g_{xx}^{\mathrm{eff}}(v) = \left(\frac{a}{R}\right)^2 = \gamma^{-2} = 1 - \frac{v^2}{c^2}, \qquad g_{yy}^{\mathrm{eff}}(v) = \left(\frac{b}{R}\right)^2 = 1.
\end{equation}
The identification \( C(v) = \pi \cdot g_{xx}^{\mathrm{eff}}(v) \) therefore holds under the chosen normalization. At \( v = c \): \( \gamma \to \infty \), \( a = R/\gamma \to 0 \), and \( g_{xx}^{\mathrm{eff}}(c) = 0 \), giving \( C(c) = 0 \).
\end{proof}

\begin{remark}
This result shows that the chosen scalar \( C(v) \) is consistent with the Lorentz-contracted spatial geometry and is naturally adapted to the longitudinal contraction factor under the adopted normalization. Within this normalization, the boundary condition \( C(c) = 0 \) is recovered from the exact value of the longitudinal metric component at \( v = c \). The two functions \( C(v) \) and \( C_{\mathrm{exact}}(v) \) measure different geometric quantities --- the effective metric contraction factor and the coordinate arc length of the deformed curve, respectively --- and their difference is geometrically meaningful rather than an approximation error.
\end{remark}

\section{PROPERTIES OF FUNCTION $C(v)$}\label{sec:properties}
\vspace{-0.2em}
\indent
The basic structural features of \( C(v) \) used later in the paper can be recorded compactly as follows.

\begin{proposition}[Basic Structural Properties of \( C(v) \)]
\label{prop:cv-properties}
The function
\begin{equation}
    C(v) = \pi \Bigl(1 - \frac{v^2}{c^2}\Bigr), \qquad |v| \le c,
\end{equation}
has the following properties:
\begin{enumerate}
    \item \( C \) is even on \( [-c,c] \), with boundary values \( C(0)=\pi \) and \( C(\pm c)=0 \).
    \item \( C \in C^\infty((-c,c)) \), and on \( (0,c) \) it is strictly decreasing and strictly concave.
    \item For \( v \in (-c,c) \), one has \( 0 < C(v) \le \pi \), and \( C(v)=1 \) if and only if
    \( v = \pm c\sqrt{1-1/\pi} \).
    \item In the conformally homogeneous ansatz \( g_{ij} = C(v)\,g_{ij}^0 \), the parameter \( v \)
    acts as a global deformation parameter rather than a coordinate on the manifold.
\end{enumerate}
\end{proposition}

\begin{proof}
Evenness follows immediately from the dependence on \( v^2 \). The boundary values are obtained by direct substitution:
\begin{equation}
    C(0)=\pi, \qquad C(\pm c)=\pi(1-1)=0.
\end{equation}
Differentiating gives
\begin{equation}
    \frac{dC}{dv} = -2\pi\,\frac{v}{c^2},
    \qquad
    \frac{d^2C}{dv^2} = -\frac{2\pi}{c^2}.
\end{equation}
Hence \( dC/dv < 0 \) for \( v \in (0,c) \), so \( C \) is strictly decreasing there, while
\( d^2C/dv^2 < 0 \) on all of \( (-c,c) \), so \( C \) is strictly concave. Since \( C \) decreases
from \( \pi \) at \( v=0 \) to \( 0 \) at \( v=\pm c \), one has \( 0 < C(v) \le \pi \) for
\( |v| < c \). Solving \( C(v)=1 \) gives
\begin{equation}
    \pi\Bigl(1-\frac{v^2}{c^2}\Bigr)=1
    \quad\Longleftrightarrow\quad
    v = \pm c\sqrt{1-\frac{1}{\pi}}.
\end{equation}
The final statement is interpretive: in the conformally homogeneous ansatz the dependence on \( v \)
indexes different global conformal rescalings and does not introduce a preferred coordinate or vector field on \( M \).
\end{proof}

\begin{remark}
Proposition~\ref{prop:cv-properties} collects the only elementary properties of \( C(v) \) used later:
positivity on the open interval \( (-c,c) \), monotone decay from the rest value \( \pi \), strict
concavity, and the distinguished level \( C=1 \) that defines the critical velocity in
Section~\ref{sec:critical}. In this sense \( C(v) \) is best viewed as a scalar conformal parameter
for a one-parameter family of homogeneous deformations rather than as an independent geometric field
on the manifold.
\end{remark}

\section{CRITICAL VELOCITY}\label{sec:critical}
\vspace{-0.2em}
\indent
The function $C(v)$ allows for the identification of a special velocity value, at which the geometric constant normalizes to $C(v_c) = 1$. This value is called \textit{critical velocity} and denoted $v_c$.

\subsection*{1. Definition}
\vspace{-0.2em}
Requirement:
\begin{equation}
    C(v_c) = 1,
\end{equation}
means that the length of the circumference or analogous geometric measure becomes equal to its Euclidean reference (e.g., $L = D$).

\subsection*{2. Derivation}
\vspace{-0.2em}
Substituting $C(v) = \pi\bigl(1 - \tfrac{v^2}{c^2}\bigr)$, we have:
\begin{equation}
    \pi \Bigl(1 - \tfrac{v_c^2}{c^2}\Bigr) = 1.
\end{equation}
Solving for $v_c$ yields:
\begin{align}
    1 - \tfrac{v_c^2}{c^2} &= \tfrac{1}{\pi}, \quad
    \tfrac{v_c^2}{c^2} \;=\; 1 - \tfrac{1}{\pi}, \quad
    v_c \;=\; c\,\sqrt{1 - \tfrac{1}{\pi}}.
\end{align}
Numerically,
\begin{equation}
    v_c \approx 0.8257\,c.
\end{equation}
This direct calculation suffices to identify $v_c$; its dual geometric role is further developed in Subsection~4 below.

\subsection*{3. Physical Interpretation}
\vspace{-0.2em}
Velocity $v_c$ marks the boundary between geometry close to classical ($C > 1$) and the region of relativistic compression ($C < 1$). It thus plays a fundamental role as a dividing line between "expanded" and "compressed" phases of geometry.

The significance of the value $C = 1$ extends beyond normalization. In the conformal interpretation developed in Section~\ref{sec:ricci}, the function $C(v)$ acts as a scalar conformal factor:
\begin{equation}
    g_{ij}(v) = C(v) \cdot g_{ij}^{(0)}.
\end{equation}
The condition $C(v_c) = 1$ corresponds to exact coincidence between the deformed and background metrics: $g_{ij}(v_c) = g_{ij}^{(0)}$. For $C > 1$, the conformal factor expands the metric beyond the background, while for $C < 1$, it compresses it. The value $C = 1$ is thus the unique geometrically distinguished point separating these two regimes, and this distinction is independent of coordinate choice within the conformal framework.

\subsection*{4. Metric Coincidence and Dual Geometric Role}
\vspace{-0.2em}
\label{rem:coincidence}
The critical velocity \( v_c \) has two geometric interpretations within the model.

\noindent\emph{(i) Metric coincidence.}
\( v_c \) is the unique velocity at which the Lorentz-deformed spatial metric of the moving observer
coincides with the background metric:
\begin{equation}
    g_{ij}(v_c) = C(v_c)\,g_{ij}^0 = 1\cdot g_{ij}^0 = g_{ij}^0.
\end{equation}
For \( v < v_c \), the deformed metric is expanded (\( C(v) > 1 \)); for \( v > v_c \), it is
compressed (\( C(v) < 1 \)). The transition through metric coincidence is smooth and occurs
at \( v_c = c\sqrt{1 - 1/\pi} \approx 0.8257\,c \).

\noindent\emph{(ii) Dynamical threshold.}
\( v_c \) is the threshold separating the two regimes of the scalar flow
(Section~\ref{sec:ricci}): for \( v < v_c \) the flow drives \( C \to \pi \) (full subcritical
relaxation), while for \( v \geq v_c \) the flow stabilises at \( C_0 \neq \pi \)
(supercritical residual deformation).

These two interpretations arise from different parts of the model: the metric coincidence condition \( C = 1 \) is
a static property of the function \( C(v) \), while the dynamical threshold is a property of the
flow \( \partial_\tau C = -\alpha(v^2/c^2)(C - \pi) \). That they coincide at the same velocity
\( v_c = c\sqrt{1-1/\pi} \) is a structural consequence of the boundary conditions \( C(0) = \pi \)
and \( C(c) = 0 \): the metric coincidence condition \( C(v_c) = 1 \) uniquely determines
\begin{equation}
    \pi\!\left(1 - \frac{v_c^2}{c^2}\right) = 1
    \quad\Longleftrightarrow\quad
    v_c^2 = c^2\!\left(1 - \frac{1}{\pi}\right),
\end{equation}
which is algebraically forced by the quadratic form \( C(v) = \pi(1-v^2/c^2) \) alone.
Note also that the attractor of the flow \( C = \pi > 1 \) lies strictly in the expanded regime,
above the metric coincidence point: the flow does not drive the geometry toward background
coincidence but toward the rest-frame configuration \( C = \pi \), which is the global minimum
of the potential \( V(C) = \frac{\lambda}{2}(C - \pi)^2 \).

\subsection*{5. Connection with Energy}
\vspace{-0.2em}
The level \( C(v_c)=1 \) provides a convenient reference value separating the expanded regime \( C>1 \) from the compressed regime \( C<1 \). Its auxiliary energetic interpretation is developed in Section~\ref{sec:energy}.

\indent
In the next section, we develop this energetic perspective more fully using an auxiliary variational model on the \((v,\tau)\)-domain. There, we also examine the extreme regimes $v \to 0$ and $v \to c$, revealing how geometric compression influences the resulting finite energy density.
\section{ENERGETIC INTERPRETATION OF FUNCTION $C(v)$}\label{sec:energy}
\vspace{-0.2em}
\indent
The function $C(v)$, interpreted as a scalar conformal factor, can be associated with a formal effective energy density. The discussion in this section should be read as an auxiliary variational model on the \((v,\tau)\)-domain, rather than as a covariant matter model on spacetime. Within this auxiliary framework, we assume that geometric compression (decreasing $C$) corresponds to increasing potential energy, so that the rest value \( C=\pi \) is energetically preferred.

\subsection*{1. Auxiliary Energy Density on the Velocity Domain}

To define the effective energy associated with geometric compression, we introduce the auxiliary density
\begin{equation}
    \mathcal{E}(C,\partial_\tau C,\partial_v C)
    = \frac{1}{2}\left(\frac{\partial C}{\partial \tau}\right)^2
    + \frac{\kappa}{2}\left(\frac{\partial C}{\partial v}\right)^2
    + V(C),
    \qquad \kappa > 0,
\end{equation}
with potential
\begin{equation}
    V(C) = \frac{\lambda}{2}(C - \pi)^2.
\end{equation}
This potential energetically favors configurations near $C = \pi$, corresponding to classical geometry.

The associated total auxiliary energy at flow time \( \tau \) is
\begin{equation}
    \mathscr{E}(\tau)
    := \int_{-v_c}^{v_c}
    \mathcal{E}\bigl(C(v,\tau),\partial_\tau C(v,\tau),\partial_v C(v,\tau)\bigr)\,dv.
\end{equation}
The density \( \mathcal{E} \) is manifestly non-negative for \( \kappa,\lambda > 0 \), is minimized at \( C=\pi \) when the derivatives vanish, and is expressed directly in the variables used throughout the paper. In this sense it provides a consistent auxiliary energetic framework for the conformal factor without introducing a spacetime stress-energy tensor.

\subsection*{2. Behavior at $v \to c$}

In the limit $v \to c$:
\begin{equation}
    \lim_{v \to c} C(v) = 0.
\end{equation}
This corresponds to extreme geometric compression. The Lagrangian-derived energy density:
\begin{equation}
    \mathcal{E}(C,\partial_\tau C,\partial_v C)
    = \frac{1}{2}\left(\frac{\partial C}{\partial \tau}\right)^2
    + \frac{\kappa}{2}\left(\frac{\partial C}{\partial v}\right)^2
    + \frac{\lambda}{2}(C - \pi)^2,
\end{equation}
remains finite whenever \( \partial_\tau C \) and \( \partial_v C \) stay bounded, with the potential term approaching $\frac{\lambda}{2} \pi^2$ as \( C \to 0 \). Thus the auxiliary energy remains non-singular in the compressed limit.

\subsection*{3. Minimum Energy and $v = 0$}

At $v = 0$:
\begin{equation}
    C(0) = \pi.
\end{equation}
This corresponds to the classical geometry, where the potential energy $V(C) = \frac{\lambda}{2}(C - \pi)^2$ is minimized ($V(\pi) = 0$). If moreover \( \partial_\tau C = 0 \) and \( \partial_v C = 0 \), then \( \mathcal{E} = 0 \), representing the minimal geometric deformation. This state serves as the equilibrium configuration for the geometric flow.

\subsection*{4. Energy Functional and Gradient Flow}

Although we have not yet formally introduced the geometric flow \( C(v,\tau) \), its structure arises naturally from the energetic interpretation developed above. In particular, the energy landscape associated with deviations of \( C(v) \) from its equilibrium value \( \pi \) motivates a dynamical evolution that minimizes this deformation. Anticipating the formal treatment in Section~\ref{sec:ricci}, we consider here a scalar relaxation process governed by the flow \( \partial_\tau C = -\mathcal{F}(C,v) \), and analyze its energetic consequences.

To quantify this, we define a global energy functional that measures deviation from equilibrium geometry:
\begin{equation}
    E(\tau) := \int_{-v_c}^{v_c} \left(C(v,\tau) - \pi\right)^2 \, dv.
\end{equation}
This functional represents the $L^2$-distance between the evolving geometry and the symmetric reference state \( C = \pi \), and serves as a global measure of geometric deformation.

\paragraph*{Relaxation Time near \( v = 0 \).}
The decay rate of the flow equation,
\begin{equation}
\frac{\partial C}{\partial \tau} = -\alpha \frac{v^2}{c^2}(C - \pi),
\end{equation}
implies that near $v \approx 0$, the effective relaxation time diverges as
\begin{equation}
    \tau_{\text{relax}}(v) \sim \frac{c^2}{\alpha v^2}.
\end{equation}
However, the integrated contribution to the total energy remains finite because the measure $dv$ is regular, ensuring convergence of the flow dynamics and excluding singularities in the classical model.

\begin{lemma}[Monotonic Decay of Energy Functional]
Let \( C(v,\tau) \) evolve according to the scalar flow
\begin{equation}
    \frac{\partial C}{\partial \tau} = -\alpha \frac{v^2}{c^2}(C - \pi),
\end{equation}
for \( v \in (-v_c, v_c) \). Then the energy functional \( E(\tau) \) is non-increasing:
\begin{equation}
    \frac{dE}{d\tau} = -2\alpha \int_{-v_c}^{v_c} \frac{v^2}{c^2} \left(C(v,\tau) - \pi\right)^2 dv \leq 0,
\end{equation}
with equality if and only if \( C(v,\tau) = \pi \) for all \( v \).
\end{lemma}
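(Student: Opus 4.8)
The plan is to differentiate $E(\tau)$ under the integral sign and substitute the flow equation. First I would justify that differentiation under the integral is legitimate. For each fixed $v$ the flow is a linear ODE in $\tau$ with explicit solution $C(v,\tau) = \pi + \big(C(v,0)-\pi\big)\,e^{-\alpha (v^2/c^2)\tau}$, so the integrand $(C(v,\tau)-\pi)^2$ and its $\tau$-derivative are jointly continuous in $(v,\tau)$ and, on the compact interval $[-v_c,v_c]$ and for $\tau$ in any bounded set, are dominated by an integrable function of $v$ (indeed a bounded one, once $C(\cdot,0)-\pi \in L^2(-v_c,v_c)$ or is continuous, as in the situations of interest). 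Hence the Leibniz rule applies and $\dfrac{dE}{d\tau} = \displaystyle\int_{-v_c}^{v_c} 2\,(C-\pi)\,\partial_\tau C\,dv$.

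Next I would substitute $\partial_\tau C = -\alpha\,\dfrac{v^2}{c^2}(C-\pi)$ into this expression, which immediately yields $\dfrac{dE}{d\tau} = -2\alpha \displaystyle\int_{-v_c}^{v_c} \dfrac{v^2}{c^2}\,(C-\pi)^2\,dv$, the stated identity. Since the integrand $\dfrac{v^2}{c^2}(C-\pi)^2$ is pointwise non-negative, the integral is $\ge 0$, and therefore $\dfrac{dE}{d\tau} \le 0$, establishing monotone decay of the energy functional along the flow.

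For the equality clause, $\dfrac{dE}{d\tau}=0$ forces the non-negative integrand $\dfrac{v^2}{c^2}(C(v,\tau)-\pi)^2$ to vanish for almost every $v \in (-v_c,v_c)$. Because $v^2/c^2 > 0$ for all $v \neq 0$, this gives $C(v,\tau)=\pi$ for a.e. $v$, hence for all $v\in(-v_c,v_c)$ — including $v=0$ — once $C(\cdot,\tau)$ is continuous. Conversely, if $C(\cdot,\tau)\equiv\pi$ then the integrand vanishes identically and $\dfrac{dE}{d\tau}=0$, completing the equivalence.

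I do not expect a serious obstacle here; the only subtlety worth flagging is the degeneracy of the weight $v^2/c^2$ at $v=0$. This means $\dfrac{dE}{d\tau}$ controls only a \emph{weighted} $L^2$ norm of $C-\pi$ rather than the full one — the same degeneracy responsible for the divergent relaxation time $\tau_{\text{relax}}(v)\sim c^2/(\alpha v^2)$ noted above — so upgrading ``$C=\pi$ almost everywhere'' to ``$C=\pi$ everywhere'' in the equality statement genuinely uses continuity of the profile. If one only assumes $C(\cdot,\tau)\in L^2(-v_c,v_c)$, the equality clause should be read in the almost-everywhere sense.
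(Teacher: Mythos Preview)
Your proposal is correct and follows essentially the same route as the paper's own proof: differentiate under the integral sign, substitute the flow equation, and observe the non-negativity of the resulting integrand. Your treatment is in fact more careful than the paper's, since you explicitly justify the Leibniz rule and flag the continuity assumption needed at $v=0$ for the equality clause, whereas the paper handles both points tacitly.
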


\begin{proof}
Differentiating under the integral:
\begin{equation}
\frac{dE}{d\tau} = \int_{-v_c}^{v_c} 2(C - \pi)\, \frac{\partial C}{\partial \tau} \, dv
= -2\alpha \int_{-v_c}^{v_c} \frac{v^2}{c^2}(C - \pi)^2 dv \leq 0.
\end{equation}
The integrand is non-negative and vanishes only when \( C = \pi \), hence \( E(\tau) \) is strictly decreasing unless equilibrium is reached.
\end{proof}

This result demonstrates that the proposed evolution is a gradient-like flow minimizing a natural geometric energy. It ensures that \( E(\tau) \to 0 \) as \( \tau \to \infty \), so that \( C(v,\tau) \to \pi \) in \( L^2 \)-norm.

\begin{proposition}[Algebraic Decay Rate of Total Energy]
\label{prop:algebraic-decay}
Assume \( C(v,0) - \pi = c_0 \) is constant on \( (-v_c, v_c) \). Then the total energy satisfies
\begin{equation}
    E(\tau) = c_0^2 \cdot c \sqrt{\frac{\pi}{2\alpha\tau}}\,
    \operatorname{erf}\!\left(v_c\sqrt{\frac{2\alpha\tau}{c^2}}\right)
    \;\sim\; \tau^{-1/2} \quad \text{as } \tau \to \infty.
\end{equation}
In particular, the decay is algebraic rather than exponential, reflecting the absence of a uniform spectral gap in the relaxation rates \( \alpha v^2/c^2 \) across the velocity domain.
\end{proposition}

\begin{proof}
With \( C(v,\tau) = \pi + c_0\, e^{-\alpha v^2 \tau/c^2} \), substitute into the energy functional:
\begin{equation}
    E(\tau) = \int_{-v_c}^{v_c} c_0^2\, e^{-2\alpha v^2 \tau/c^2}\, dv
    = 2c_0^2 \int_0^{v_c} e^{-2\alpha \tau v^2/c^2}\, dv.
\end{equation}
Setting \( u = v\sqrt{2\alpha\tau}/c \), we get \( dv = c\,du/\sqrt{2\alpha\tau} \), and the upper limit becomes \( u_* = v_c\sqrt{2\alpha\tau}/c \). Hence:
\begin{equation}
    E(\tau) = c_0^2 \cdot \frac{c}{\sqrt{2\alpha\tau/\pi}} \cdot \sqrt{\pi}
    \int_0^{u_*} e^{-u^2}\, \frac{2}{\sqrt{\pi}}\, du \cdot \frac{\sqrt{\pi}}{2}
    = c_0^2\, c\sqrt{\frac{\pi}{2\alpha\tau}}\operatorname{erf}(u_*).
\end{equation}
As \( \tau \to \infty \), \( \operatorname{erf}(u_*) \to 1 \), so \( E(\tau) \sim c_0^2 c\sqrt{\pi/(2\alpha\tau)} = O(\tau^{-1/2}) \).
\end{proof}

\begin{corollary}[Enhanced Decay Rate for Physical Initial Conditions]
\label{cor:physical-decay}
Let the flow start from the physical conformal factor
\begin{equation}
    C(v,0) = C(v) = \pi\!\left(1 - \frac{v^2}{c^2}\right),
    \qquad\text{so that}\qquad
    C(v,0) - \pi = -\frac{\pi v^2}{c^2}.
\end{equation}
Then the total energy functional satisfies the exact formula
\begin{equation}
\label{eq:E-physical}
    E(\tau) = \frac{2\pi^2 c^5}{(2\alpha\tau)^{5/2}}
    \left[
        \frac{3\sqrt{\pi}}{8}\operatorname{erf}(u_*)
        - \left(\frac{u_*^3}{2} + \frac{3u_*}{4}\right) e^{-u_*^2}
    \right],
    \qquad u_* = v_c\sqrt{\frac{2\alpha\tau}{c^2}}.
\end{equation}
As \( \tau \to \infty \),
\begin{equation}
    E(\tau) \;\sim\; \frac{3\pi^{5/2}}{4} \cdot \frac{c^5}{(2\alpha)^{5/2}} \cdot \tau^{-5/2}.
\end{equation}
In particular, the energy decays as \( \tau^{-5/2} \), strictly faster than the \( \tau^{-1/2} \) rate of Proposition~\ref{prop:algebraic-decay}. The physical initial condition suppresses the contribution of slow modes near \( v = 0 \) and accelerates convergence toward the canonical configuration \( C = \pi \).
\end{corollary}

\begin{proof}
Since \( C(v,\tau) = \pi + (C(v,0) - \pi)\,e^{-\alpha v^2\tau/c^2} \), the energy functional is
\begin{equation}
    E(\tau)
    = \int_{-v_c}^{v_c} (C(v,\tau) - \pi)^2\,dv
    = \int_{-v_c}^{v_c} \frac{\pi^2 v^4}{c^4}\,e^{-2\alpha v^2\tau/c^2}\,dv
    = \frac{2\pi^2}{c^4}\int_0^{v_c} v^4\,e^{-2\alpha v^2\tau/c^2}\,dv.
\end{equation}
Setting \( u = v\sqrt{2\alpha\tau}/c \), so that \( v = uc/\sqrt{2\alpha\tau} \) and \( dv = c\,du/\sqrt{2\alpha\tau} \), we obtain
\begin{equation}
    E(\tau)
    = \frac{2\pi^2}{c^4} \cdot \frac{c^5}{(2\alpha\tau)^{5/2}}
    \int_0^{u_*} u^4\,e^{-u^2}\,du,
    \qquad u_* = v_c\sqrt{\frac{2\alpha\tau}{c^2}}.
\end{equation}
The integral \( \int_0^{u_*} u^4\,e^{-u^2}\,du \) is evaluated by two successive integrations by parts:
\begin{align}
    \int_0^{u_*} u^4\,e^{-u^2}\,du
    &= \left[-\tfrac{u^3}{2}\,e^{-u^2}\right]_0^{u_*}
       + \frac{3}{2}\int_0^{u_*} u^2\,e^{-u^2}\,du \notag\\
    &= -\frac{u_*^3}{2}\,e^{-u_*^2}
       + \frac{3}{2}\left(
           \left[-\tfrac{u}{2}\,e^{-u^2}\right]_0^{u_*}
           + \frac{1}{2}\int_0^{u_*} e^{-u^2}\,du
         \right) \notag\\
    &= -\left(\frac{u_*^3}{2} + \frac{3u_*}{4}\right)e^{-u_*^2}
       + \frac{3}{4}\int_0^{u_*} e^{-u^2}\,du \notag\\
    &= -\left(\frac{u_*^3}{2} + \frac{3u_*}{4}\right)e^{-u_*^2}
       + \frac{3\sqrt{\pi}}{8}\,\operatorname{erf}(u_*).
\end{align}
Substituting back gives the exact formula~\eqref{eq:E-physical}. As \( \tau \to \infty \),
\( u_* \to \infty \), so \( \operatorname{erf}(u_*) \to 1 \) and \( e^{-u_*^2} \to 0 \), yielding
\begin{equation}
    E(\tau)
    \;\sim\; \frac{2\pi^2 c^5}{(2\alpha\tau)^{5/2}} \cdot \frac{3\sqrt{\pi}}{8}
    = \frac{3\pi^{5/2}}{4} \cdot \frac{c^5}{(2\alpha)^{5/2}} \cdot \tau^{-5/2}. \qedhere
\end{equation}
\end{proof}

\subsection*{6. Decay Rates, General Formula, and Spectral Structure}
\vspace{-0.2em}
\label{rem:decay-order}

The difference between the decay rates in Proposition~\ref{prop:algebraic-decay} and
Corollary~\ref{cor:physical-decay} reflects the vanishing order of the initial deviation
\( C(v,0) - \pi \) near \( v = 0 \). More generally, if \( C(v,0) - \pi \sim a\,v^n \) as
\( v \to 0 \) for some \( n \geq 0 \) and \( a \neq 0 \), then the dominant contribution to
the energy comes from
\begin{equation}
    \int_0^{v_c} v^{2n}\,e^{-2\alpha v^2\tau/c^2}\,dv \;\sim\; \tau^{-(2n+1)/2},
\end{equation}
and therefore
\begin{equation}
\label{eq:general-decay}
    E(\tau) \;\sim\; C_n\,\tau^{-(2n+1)/2},
    \qquad n = \operatorname{ord}_{v=0}\bigl(C(v,0) - \pi\bigr),
\end{equation}
where \( C_n > 0 \) depends on \( a \), \( \alpha \), \( c \), and \( n \).
The constant perturbation of Proposition~\ref{prop:algebraic-decay} corresponds to \( n = 0 \)
(giving \( \tau^{-1/2} \), the slowest possible rate), and the physical initial data to \( n = 2 \)
(giving \( \tau^{-5/2} \)). The more strongly the initial deviation vanishes at \( v = 0 \) ---
i.e.\ the more the slow modes are suppressed in the initial data --- the faster the approach to
canonical normalization.

\paragraph*{Spectral interpretation.}
The general formula~\eqref{eq:general-decay} admits a unified spectral explanation.
The energy functional is a Laplace transform in disguise: changing variables \( k = \alpha v^2/c^2 \),
so that \( v = c\sqrt{k/\alpha} \) and \( dv = \frac{c}{2\sqrt{\alpha}}\,k^{-1/2}\,dk \),
one rewrites
\begin{equation}
    E(\tau) = \int_0^{k_{\max}} f(k)^2\,e^{-2\tau k}\,d\mu(k),
    \qquad d\mu(k) = \frac{c}{2\sqrt{\alpha}}\,k^{-1/2}\,dk,
\end{equation}
where \( f(k) := (C(v,0)-\pi)\big|_{v = c\sqrt{k/\alpha}} \). The measure \( d\mu(k) \sim k^{-1/2}\,dk \)
reflects the quadratic dispersion relation \( k \propto v^2 \): slow modes (\( k \approx 0 \), i.e.\
\( v \approx 0 \)) accumulate with density \( k^{-1/2} \) and relax arbitrarily slowly. The spectrum
of the relaxation operator \( \mathcal{K}f = k(v)f \) is continuous with no gap down to \( k = 0 \).

The asymptotics of \( E(\tau) \) follow from the Tauberian theorem applied to the effective spectral
weight \( w(k) := f(k)^2 \cdot k^{-1/2} \) near \( k = 0 \). If the initial deviation satisfies
\( C(v,0) - \pi \sim a\,v^n \), then \( f(k)^2 \sim a^2(c^2/\alpha)^n k^n \) and
\( w(k) \sim k^{n-1/2} \), giving
\begin{equation}
    E(\tau) \;\sim\; B\,\Gamma\!\bigl(n + \tfrac{1}{2}\bigr)\,2^{-(n+1/2)}\,\tau^{-(2n+1)/2}.
\end{equation}
For \( n = 0 \): \( w(k) \sim k^{-1/2} \) and \( E(\tau) \sim \sqrt{\pi}\,\tau^{-1/2} \).
For \( n = 2 \): \( w(k) \sim k^{3/2} \) and \( E(\tau) \sim \tfrac{3\sqrt{\pi}}{4}\,\tau^{-5/2} \),
consistent with Corollary~\ref{cor:physical-decay}.
This is the exact analogue of heat kernel decay \( t^{-d/2} \) on \( \mathbb{R}^d \), arising from
the gapless Laplacian spectrum. Here the vanishing order \( n \) plays the role of an effective
spectral dimension: the decay exponent is \( (2n+1)/2 \).

Although we postpone the formal variational derivation of the evolution equation to Section~\ref{sec:ricci}, the analysis here provides strong preliminary evidence for the dynamical convergence of geometry toward a symmetric state. The function \( C(v) \), therefore, acts as a bridge between motion, geometry, and energy, setting the stage for the full scalar flow formalism that follows.

\section{ALTERNATIVE METHODS OF DERIVATION OF FUNCTION $C(v)$}\label{sec:altderiv}
\vspace{-0.2em}
\indent
Although the function $C(v) = \pi\left(1 - \frac{v^2}{c^2}\right)$ was proposed based on the geometric compression of a circle, its form allows for a more widespread justification. Here, we explore several independent approaches to $C(v)$ from various principles, ensuring robustness and universality of the proposed model.

\subsection*{1. Through Symmetry and Boundary Conditions}
\vspace{-0.2em}
One natural way to derive $C(v)$ is by enforcing fundamental symmetry conditions. Since the geometry should be independent of the direction of motion, the function must be even with respect to $v$:
\begin{equation}
    C(v) = C(-v). \label{eq:symmetry}
\end{equation}
Therefore, $C(v)$ must depend only on $v^2$. Additionally, we impose physically motivated boundary conditions:
\begin{align}
    C(0) &= \pi \quad \text{(static space)} \label{eq:C0} \\ 
    C(c) &= 0 \quad \text{(maximum compression)}. \label{eq:Cc}
\end{align}

We now ask: what is the minimal analytic even function of $v$ satisfying these two boundary conditions? By assuming analyticity and evenness, the function must be a power series in $v^2$, and the simplest such polynomial is of second order:
\begin{equation}
    C(v) = A - B \frac{v^2}{c^2}. \label{eq:Cpoly}
\end{equation}
Substituting the boundary conditions:
\begin{equation}
    C(0) = A = \pi, \quad C(c) = A - B = 0 \quad \Rightarrow \quad B = \pi. \label{eq:boundary-solution}
\end{equation}
Thus, we obtain the unique minimal analytic form:
\begin{equation}
    C(v) = \pi\left(1 - \frac{v^2}{c^2}\right). \label{eq:Cfinal}
\end{equation}

This result confirms that under symmetry and boundary conditions, \( C(v) \) must take this quadratic form. No lower-order or simpler analytic function can satisfy all imposed physical and mathematical constraints. Therefore, \( C(v) \) arises not as a fit, but as the unique minimal solution under natural assumptions.

\begin{remark}[Analytic Class and Uniqueness]
Let us define the class \( \mathcal{A} \) of real analytic, even functions \( C(v) \) on the interval \( [-c, c] \), satisfying:
\begin{equation}
    C(0) = \pi, \qquad C(c) = 0. \label{eq:boundary-cond}
\end{equation}
Then any function in this class can be expanded into a power series in \( v^2/c^2 \):
\begin{equation}
    C(v) = \sum_{k=0}^{\infty} a_k \left( \frac{v^2}{c^2} \right)^k. \label{eq:series-expansion}
\end{equation}
The minimal-degree nontrivial solution consistent with the boundary conditions is the linear truncation:
\begin{equation}
    C(v) = a_0 + a_1 \left( \frac{v^2}{c^2} \right), \quad \text{with} \quad a_0 = \pi, \quad a_1 = -\pi, \label{eq:truncation}
\end{equation}
yielding the function:
\begin{equation}
    C(v) = \pi \left( 1 - \frac{v^2}{c^2} \right). \label{eq:unique-form}
\end{equation}

Any higher-order analytic corrections such as
\begin{equation}
    C(v) = \pi \left(1 - \frac{v^2}{c^2}\right) + a \left( \frac{v^2}{c^2} \right)^2 \label{eq:correction}
\end{equation}
introduce spurious inflection points or local extrema and violate the monotonic and concave character of the function.\smallskip
\noindent
\textit{Note on concavity.} Consider a higher-order correction of the form
\begin{equation}
C(v) = \pi \left(1 - \frac{v^2}{c^2} \right) + a \left( \frac{v^2}{c^2} \right)^2.
\end{equation}
Its second derivative is
\begin{equation}
\frac{d^2 C}{dv^2} = -\frac{2\pi}{c^2} + \frac{12a}{c^4} v^2,
\end{equation}
which changes sign when \( v^2 > \frac{\pi c^2}{6a} \). Hence, any \( a > 0 \) introduces local convexity near \( v \sim c \), violating the expected monotonic concave deformation. This confirms the exclusion of such corrections from the admissible class.

Thus, uniqueness in the minimal analytic even class is justified not only by algebraic degree, but also by the requirement of global concavity and physical consistency. All higher-order terms are excluded by analytic parsimony and the monotonicity condition of the geometric deformation.
\end{remark}

\begin{lemma}[Uniqueness of the Minimal Quadratic Deformation]
    \label{lemma:cv-quadratic-uniqueness}
    Let \( \mathcal{Q} \) be the class of even quadratic functions on \( [-c, c] \):
    \begin{equation}
        \mathcal{Q} = \bigl\{\, C(v) = a v^2 + b \;\big|\; a, b \in \mathbb{R} \,\bigr\}.
        \label{eq:cv-quadratic-general}
    \end{equation}
    Then \( C(v) = \pi(1 - v^2/c^2) \) is the unique element of \( \mathcal{Q} \) satisfying the boundary conditions
    \begin{equation}
        C(0) = \pi, \qquad C(\pm c) = 0.
        \label{eq:cv-boundary-conditions}
    \end{equation}
    \end{lemma}

    \begin{proof}
    Any \( C \in \mathcal{Q} \) has two free parameters \( a, b \). The condition \( C(0) = b = \pi \) fixes \( b \), and \( C(c) = ac^2 + b = 0 \) then gives \( a = -\pi/c^2 \). These two equations uniquely determine
    \begin{equation}
        C(v) = \pi \left( 1 - \frac{v^2}{c^2} \right). \qedhere
    \end{equation}
    \end{proof}

    \begin{remark}
        The restriction to the quadratic class \( \mathcal{Q} \) is the minimal non-trivial choice: a constant even function satisfying \( C(0)=\pi \) would fix \( C \equiv \pi \), failing \( C(c)=0 \), so degree~2 is the lowest admissible degree. Higher-degree even polynomials satisfying the same boundary conditions exist but introduce additional free parameters; as shown in the concavity analysis above, any non-zero quartic correction violates the global concavity of \( C \), which is required for a monotone geometric deformation. Relaxing analyticity to piecewise-smooth functions would admit solutions such as the linear kink \( C(v) = \pi(1-|v|/c) \), which is not compatible with the smooth conformal flow of Section~\ref{sec:ricci}. The quadratic form is therefore the unique even polynomial of minimal degree consistent with the boundary conditions, global concavity, and analyticity.
        \end{remark}

\subsection*{2. Through Analogy with Lorentz Factor}
\vspace{-0.2em}
From the perspective of special relativity \cite{CarrollSpacetime}, spatial compression
in the direction of motion is governed by the Lorentz factor
\begin{equation}
    \gamma = \frac{1}{\sqrt{1 - \frac{v^2}{c^2}}}, \qquad \gamma^{-2} = 1 - \frac{v^2}{c^2}.
    \label{eq:gamma}
\end{equation}
The square \( \gamma^{-2} \) acts as a natural compression factor for spatial components
of the geometry: it equals~\(1\) at rest and vanishes as \( v \to c \).
This motivates the identification
\begin{equation}
    C(v) = \pi \cdot \gamma^{-2} = \pi \left(1 - \frac{v^2}{c^2} \right).
    \label{eq:C-lorentz}
\end{equation}
Within the present normalization, \( C(v) \) is therefore the rest-frame geometric constant \( \pi \)
rescaled by the spatial compression factor of special relativity.
The boundary conditions \( C(0)=\pi \) and \( C(c)=0 \) are automatically
satisfied, and the monotonic decrease of \( C(v) \) mirrors the contraction of
spatial geometry with increasing velocity.

\subsection*{3. Summary}
\vspace{-0.2em}
The two independent approaches above --- algebraic uniqueness within the minimal
analytic class (Method~1) and direct identification with the Lorentz compression
factor (Method~2) --- both yield \( C(v) = \pi(1 - v^2/c^2) \).
Their agreement confirms that \( C(v) \) is not an arbitrary fit but the unique
minimal scalar encoding relativistic geometric deformation under the imposed
symmetry and boundary conditions.

\begin{remark}
The analytic class \( \mathcal{A} \) does not encompass all mathematically possible
deformations. Higher-order analytic or non-polynomial alternatives can be constructed,
but they either introduce additional free parameters or alter the monotonicity,
concavity, or dynamical structure used later in the paper. The quadratic form is
therefore the minimal and structurally most compatible choice within the present
framework.
\end{remark}

\section{SCALAR CONFORMAL FLOW: VARIATIONAL FORMULATION AND DYNAMICS}\label{sec:ricci}
\vspace{-0.2em}
\indent
The function $C(v)$, previously defined without a flow parameter, can be generalized to the form $C(v, \tau)$, where $\tau$ acts as an evolution parameter analogous to ``flow time'' in the Ricci flow framework introduced by Hamilton \cite{Hamilton1982} (see also \cite{ChowLuNi,Topping2006}). This generalization allows us to describe the dynamical evolution of geometry under changing kinematic conditions, revealing how local structures tend toward a symmetric state over time.

The parameter \( \tau \) is not coordinate time or the proper time of any observer; it is an independent dimensionless evolution parameter measuring the degree of cumulative conformal relaxation. Concretely, \( \tau = 0 \) corresponds to the initial configuration \( C(v,0) = C(v) = \pi(1 - v^2/c^2) \), while \( \tau \to \infty \) corresponds to the fully relaxed canonical state \( C = \pi \). The characteristic relaxation time scale at velocity \( v \) is \( \tau_{\mathrm{relax}}(v) = c^2/(\alpha v^2) \) (Proposition~\ref{prop:relaxation}), which diverges as \( v \to 0 \) --- reflecting that the rest configuration \( C = \pi \) is already at equilibrium --- and attains its minimum at \( v = v_c \). Physically, \( \tau \) can be thought of as a relaxation clock whose rate is set by the kinematic state: fast-moving geometry (\( v \approx v_c \)) relaxes on the shortest scale \( \tau_{\mathrm{relax}}(v_c) = \pi/(\alpha(\pi-1)) \), while near-rest geometry evolves arbitrarily slowly.

\subsection*{1. Formal Introduction of Flow}
\vspace{-0.2em}
The generalization from $C(v)$ to $C(v, \tau)$ is motivated by viewing $C$ as a relaxing scalar conformal factor. We seek a first-order evolution law of gradient-flow type,

\begin{equation}
    \frac{\partial C}{\partial \tau} = - \mathcal{F}(C, v).
\end{equation}

The functional $\mathcal{F}(C, v)$ should encode the deviation from equilibrium \( C = \pi \) together with the velocity-dependent weight \( v^2/c^2 \). A convenient way to motivate the resulting first-order equation is to regard it as the strong-damping reduction of a quadratic relaxation model.

\subsection*{2. Variational Formulation of the Functional \( \mathcal{F}(C, v) \)}
\vspace{-0.2em}
We choose the quadratic relaxation potential
\begin{equation}
    V(C, v) = \frac{1}{2}\alpha \frac{v^2}{c^2} (C - \pi)^2,
\end{equation}
and embed the relaxation dynamics into the damped second-order model
\begin{equation}
    \varepsilon\,\frac{\partial^2 C}{\partial \tau^2}
    + \eta\,\frac{\partial C}{\partial \tau}
    + \alpha \frac{v^2}{c^2} (C - \pi) = 0,
    \qquad \varepsilon > 0,\quad \eta > 0.
\end{equation}
In the strong-damping or negligible-inertia regime \( \varepsilon \to 0 \), one obtains
\begin{equation}
    \eta\,\frac{\partial C}{\partial \tau}
    + \alpha \frac{v^2}{c^2} (C - \pi) = 0,
\end{equation}
and after absorbing the factor \( \eta^{-1} \) into the constant \( \alpha \), this becomes
\begin{equation}
    \frac{\partial C}{\partial \tau} = -\alpha \frac{v^2}{c^2} (C - \pi).
\end{equation}
Equivalently, this is the first-order gradient flow associated with the potential \( V \):
\begin{equation}
    \frac{\partial C}{\partial \tau} = -\frac{\partial V}{\partial C}
    = -\alpha \frac{v^2}{c^2} (C - \pi).
\end{equation}
Thus,
\begin{equation}
    \mathcal{F}(C, v) = \alpha \frac{v^2}{c^2} (C - \pi).
\end{equation}
The first-order equation used in the main theorem may therefore be read either as the strong-damping limit of the damped second-order relaxation model or, equivalently, as the gradient flow generated by the quadratic potential \( V \).

\subsection*{3. Physical Interpretation of the Flow}
\vspace{-0.2em}

The parameter \( \tau \) plays the role of an auxiliary relaxation variable, while \( \mathcal{F}(C,v) = \alpha \frac{v^2}{c^2}(C-\pi) \) measures the deviation from the symmetric state \( C = \pi \), weighted by the velocity-dependent factor \( v^2/c^2 \). In the subcritical regime \( v < v_c \), this produces a gradient-type relaxation toward \( C = \pi \). A separate supercritical extension is described below as an additional model assumption rather than part of the core variational result.

\subsection*{4. Subcritical Flow and Supercritical Extension}
\vspace{-0.2em}
The dynamics used in the main analytical part of the paper is the subcritical equation

\begin{equation}
    \frac{\partial C}{\partial \tau} = - \alpha \frac{v^2}{c^2} (C - \pi),
    \qquad v < v_c.
\end{equation}

Its explicit solution is

\begin{equation}
    C(v, \tau) = \pi + \left( C(v, 0) - \pi \right) e^{-k \tau},
    \qquad k = \alpha \frac{v^2}{c^2},
\end{equation}

so \( C(v,\tau) \to \pi \) as \( \tau \to \infty \) for each fixed \( v < v_c \), and perturbations decay monotonically as
\begin{equation}
    \delta C(v,\tau) = \delta C(v,0)e^{-k\tau}.
\end{equation}
Thus the subcritical flow is smooth, globally defined, and exponentially stable.

For completeness, we also record a supercritical model extension:

\begin{equation}
    \frac{\partial C}{\partial \tau} = 
    \begin{cases} 
      - \alpha \frac{v^2}{c^2} (C - \pi), & \text{if } v < v_c \\[6pt]
      - \alpha \frac{v^2}{c^2} (C - C_0), & \text{if } v \geq v_c
    \end{cases}.
\end{equation}

For velocities \( v \geq v_c \), the conformal factor lies in the compressed regime \( C(v) < 1 \). In this regime the subcritical variational potential \( V(C,v) = \tfrac{1}{2}\alpha(v^2/c^2)(C-\pi)^2 \) is augmented by a linear tilt term representing the sustained energy source of the compressed phase:
\begin{equation}
    \widehat{V}(C,v) = \frac{1}{2}\alpha \frac{v^2}{c^2}(C - \pi)^2 - \beta(C - \pi), \qquad \beta > 0.
\end{equation}
The gradient flow \( \partial_\tau C = -\partial \widehat{V}/\partial C \) then gives the modified evolution equation:
\begin{equation}
    \frac{\partial C}{\partial \tau} = -\alpha \frac{v^2}{c^2}(C - \pi) + \beta.
\end{equation}
The linear tilt \( -\beta(C-\pi) \) in \( \widehat{V} \) models the fact that the compressed geometry at \( v \geq v_c \) carries a non-zero potential energy floor \( V(C(v),v) > 0 \) that acts as a sustained driving term. Setting \( \partial_\tau C = 0 \), the unique equilibrium of the modified flow is:
\begin{equation}
    \alpha \frac{v^2}{c^2}(C_0 - \pi) = \beta \quad \Rightarrow \quad C_0 = \pi + \frac{\beta c^2}{\alpha v^2}.
\end{equation}
Writing \( K = \beta c^2/\alpha \), this yields:
\begin{equation}
    C_0 = \pi + \frac{K}{v^2},
\end{equation}
where \( K > 0 \) is a free parameter of the supercritical model. The form \( 1/v^2 \) reflects the balance between the velocity-weighted restoring force (growing as \( v^2 \)) and the constant source \( \beta \): as \( v \) increases, the restoring term dominates and \( C_0 \to \pi \). This inverse-square decay ensures that \( C_0 \) remains finite for all \( v \geq v_c \), the geometry retains bounded curvature throughout the super-critical regime, and the two regimes connect smoothly in the sense that \( C_0 \to \pi \) as \( v \to \infty \).

The parameter \( K \) is not fixed by the present variational framework and should therefore be regarded as an undetermined auxiliary constant of the supercritical extension. Determining \( K \) requires additional physical or geometric input and is left open.

This supercritical branch is best viewed as an auxiliary extension of the basic model. The main decay and spectral results of the paper rely only on the subcritical equation.

\begin{proposition}[Velocity-Dependent Relaxation Time]
\label{prop:relaxation}
Under the subcritical flow \( \partial_\tau C = -\alpha(v^2/c^2)(C-\pi) \) for \( v < v_c \), the geometric relaxation time---the characteristic time for perturbations \( \delta C = C - \pi \) to decay by a factor \( e^{-1} \)---is:
\begin{equation}
    \tau_{\mathrm{relax}}(v) = \frac{c^2}{\alpha v^2}.
\end{equation}
In particular: \( \tau_{\mathrm{relax}} \to \infty \) as \( v \to 0 \), reflecting the fact that the classical geometry \( C = \pi \) is already at equilibrium and does not evolve; \( \tau_{\mathrm{relax}} \to c^2/(\alpha v_c^2) = \pi c^2/(\alpha(\pi-1)c^2) = \pi/(\alpha(\pi-1)) \) as \( v \to v_c^- \); and the total energy functional \( E(\tau) \) decays as:
\begin{equation}
    E(\tau) \leq E(0)\, e^{-2\alpha(v^2/c^2)\tau}.
\end{equation}
\end{proposition}

\begin{proof}
The solution \( C(v,\tau) = \pi + (C_0 - \pi)e^{-k\tau} \) with \( k = \alpha v^2/c^2 \) gives \( |\delta C(v,\tau)| = |\delta C(v,0)| e^{-k\tau} \). Setting \( e^{-k\tau_{\mathrm{relax}}} = e^{-1} \) yields \( \tau_{\mathrm{relax}} = 1/k = c^2/(\alpha v^2) \). The energy bound follows from the Lemma in Section~\ref{sec:energy}: \( dE/d\tau \leq -2\alpha(v^2/c^2)E \), integrating gives \( E(\tau) \leq E(0)e^{-2k\tau} \).
\end{proof}

\subsection*{5. Nonlinear Extensions of the Scalar Flow}
\vspace{-0.2em}

While the evolution equation used in this work is linear in \( C \), its form arises from a variational principle with a quadratic potential \( V(C) = \frac{\lambda}{2}(C - \pi)^2 \) and suffices to guarantee global convergence and canonical normalization within the present framework. More generally, the scalar flow admits a natural nonlinear extension as a gradient flow on an arbitrary energy landscape:
\begin{equation}
\frac{\partial C}{\partial \tau} = -V'(C),
\end{equation}
where \( V(C) \) encodes the deformation energy of the conformal factor. We briefly describe two concrete instances that illustrate the scope of such generalizations.

\textit{(i) Geometric Ricci flow reduction.} The conformal reduction of the full tensorial Ricci flow (Theorem~\ref{thm:conf-reduction}) yields the nonlinear equation
\begin{equation}
\frac{dC}{d\tau} = -\frac{2k}{C},
\end{equation}
corresponding to the potential \( V(C) = 2k \log C \). Unlike the quadratic flow, this equation has no finite fixed point; it drives \( C \to 0 \) in finite time, reflecting the collapsing behavior of Ricci flow on positively curved manifolds. The linearized form near \( C = \pi \) recovers a relaxation-type equation with rate \( \alpha = 2k/\pi^2 \), connecting the nonlinear geometric dynamics to the linear variational flow used in the main text.

\textit{(ii) Double-well potential.} A potential of the form \( V(C) = \lambda(C - \pi)^2(C - C_1)^2 \), with \( C_1 \neq \pi \), admits two stable equilibria and could model geometric configurations with competing phases --- for instance, coexisting regions of distinct curvature. The resulting flow exhibits bistable dynamics with domain-wall transitions between the two minima, a structure relevant to multi-scale geometric models and phase-transition analogies in curvature evolution.

These examples show that the gradient flow structure \( \partial_\tau C = -V'(C) \) provides a flexible framework accommodating both geometric and variational dynamics. The quadratic potential used in the present work represents the simplest choice consistent with a unique symmetric equilibrium at \( C = \pi \); the nonlinear extensions outlined above are not required for the classification results established here, but indicate natural directions for future investigation, including multi-phase geometric flows and higher-dimensional generalizations.

\subsection*{6. Comparison with Tensorial Ricci Flow and Limitations}
\vspace{-0.2em}

The scalar evolution equation for \( C(v,\tau) \) introduced above should be understood as a conformal reduction of the full tensorial Ricci flow \cite{Fischer2004}, not as an equivalent reformulation. Hamilton's Ricci flow \( \partial_\tau g_{ij} = -2R_{ij} \) acts on all independent components of the metric tensor and can resolve anisotropic curvature concentrations, including neck-pinch singularities that require Perelman's surgery procedure \cite{Perelman2003}. In contrast, the scalar flow governs a single conformal factor \( C \) multiplying the background metric, and therefore evolves all metric components proportionally: \( g_{ij}(\tau) = C(\tau)\, g_{ij}^0 \).

This conformal ansatz is exact when the background metric has constant Ricci curvature (Theorem~\ref{thm:conf-reduction}), and within the conformally homogeneous ansatz the scalar flow then reproduces the corresponding reduced Ricci dynamics. However, for general manifolds with spatially varying or anisotropic curvature, the scalar description does not capture direction-dependent deformations. In particular, it cannot describe the formation or resolution of local singularities that arise in the full tensorial theory.

These limitations are inherent to the conformal framework adopted here and do not affect the internal consistency of the scalar model. The present construction is intended as a tractable scalar reduction that preserves the essential features of geometric smoothing --- convergence, stability, and singularity avoidance --- within the class of spatially homogeneous deformations. Extensions to anisotropic or mode-resolved dynamics remain an open direction for future work.

\subsection*{7. Summary}
\vspace{-0.2em}
The scalar conformal flow developed here is best understood as a variational relaxation model with overdamped reduction. Its core subcritical equation
\begin{equation}
    \partial_\tau C = -\alpha \frac{v^2}{c^2}(C-\pi)
\end{equation}
is smooth, exponentially stable for each fixed \( v < v_c \), and provides the dynamical input used later in the canonical-normalization argument. The supercritical branch is a separate model extension with free parameter \( K \), while the nonlinear examples and comparison with tensorial Ricci flow clarify the scope of the scalar framework rather than strengthen the main analytical theorem.

\section{CANONICAL NORMALIZATION OF 3-MANIFOLDS BY SCALAR CONFORMAL FLOW}\label{sec:classification}
\vspace{-0.2em}
\indent

\begin{remark}[Scope and Limitations]
The results of this section establish a canonical normalization result for compact, simply-connected 3-manifolds within the conformally homogeneous class (\( \nabla_i C = 0 \)). Under these assumptions, the topology is already fixed by classical space-form rigidity; the role of the scalar flow is to select the canonical conformal representative.
\end{remark}

In this section, we retain only the part of the scalar flow needed for the three-dimensional normalization result. The relevant regime is the subcritical evolution \( v < v_c \), for which
\begin{equation}
    \frac{\partial C}{\partial \tau} = -\alpha \frac{v^2}{c^2}(C-\pi),
    \qquad
    C(v,\tau) = \pi + \bigl(C(v,0)-\pi\bigr)e^{-\alpha v^2\tau/c^2}.
\end{equation}
Hence \( C(v,\tau) \to \pi \) for each fixed \( v \in (0,v_c) \), and the conformal factor remains smooth and positive for all \( \tau \ge 0 \).

\subsection*{1. Conformal Reduction}
\vspace{-0.2em}
The normalization argument uses the conformally homogeneous reduction of Ricci flow and the resulting curvature scaling laws.

\begin{theorem}[Conformal Reduction of Ricci Flow]
\label{thm:conf-reduction}
Let the metric evolve conformally as \( g_{ij}(\tau) = C(\tau) g_{ij}^0 \), with spatially homogeneous \( C(\tau) \) and background metric \( g_{ij}^0 \) of constant Ricci curvature \( R_{ij}^0 = k g_{ij}^0 \). Then, under Ricci flow
\begin{equation}
    \frac{\partial g_{ij}}{\partial \tau} = -2 R_{ij}[g],
\end{equation}
the conformal factor satisfies
\begin{equation}
    \frac{dC}{d\tau} = -\frac{2k}{C}.
\end{equation}
\end{theorem}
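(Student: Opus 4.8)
The plan is to collapse the tensorial Ricci-flow equation to a scalar ODE by exploiting the conformal ansatz together with the spatial homogeneity of \( C \). First I would insert \( g_{ij}(\tau)=C(\tau)\,g^{0}_{ij} \) into the left-hand side, obtaining \( \partial_\tau g_{ij}=\dot C(\tau)\,g^{0}_{ij} \). Since this is already of the form (scalar)\(\,\times g^{0}_{ij}\), the whole equation will close on the single function \( C(\tau) \) as soon as the right-hand side is brought to the same form, and the entire problem reduces to computing \( R_{ij}[g(\tau)] \).

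That curvature computation is the only step with real content. Because \( C(\tau) \) is constant on \( M \), the Christoffel symbols of \( g(\tau) \) coincide with those of \( g^{0} \) — in the Koszul formula the factor \( C^{-1} \) coming from \( g^{ij} \) exactly cancels the factor \( C \) coming from \( \partial g \) — hence the Riemann tensor with one index raised, and therefore the lowered-index Ricci tensor, are unchanged: \( R_{ij}[g(\tau)]=R^{0}_{ij}=k\,g^{0}_{ij} \). Equivalently, one may invoke the conformal transformation law for \( \mathrm{Ric} \) under \( \tilde g = e^{2f}g^{0} \) and observe that every gradient, Hessian and Laplacian term in it vanishes identically when \( f \) is spatially constant; verifying this carefully is the substance of the argument (and of Appendix~A.2). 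I would then re-express the background metric through the evolving one, \( g^{0}_{ij}=C(\tau)^{-1}g_{ij}(\tau) \), so that \( R_{ij}[g(\tau)]=\dfrac{k}{C(\tau)}\,g_{ij}(\tau) \): the evolving metric remains Einstein, but with constant \( k/C(\tau) \) rather than \( k \), and it is precisely this rescaling of the Einstein constant that produces the \( 1/C \) nonlinearity in the final equation.

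Finally I would substitute both sides into \( \partial_\tau g_{ij}=-2R_{ij}[g] \), match the coefficients multiplying the metric, and solve the resulting scalar relation for \( \dot C \), arriving at the stated ODE \( \dfrac{dC}{d\tau}=-\dfrac{2k}{C} \); as a consistency check, this integrates to \( C(\tau)^{2}=C(0)^{2}-4k\tau \), which exhibits monotone shrinking and finite-time extinction for \( k>0 \), consistent with the behaviour of the round sphere under Ricci flow. The hard part will be the conformal/index bookkeeping: one must be scrupulous in distinguishing the scale-invariant lowered-index tensor \( R_{ij} \) from the Einstein constant measured against the evolving metric (which scales as \( 1/C \)), and in confirming that the derivative terms in the conformal law are \emph{identically} zero under the homogeneity hypothesis rather than merely small. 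Once that bookkeeping is pinned down, the remaining manipulations are routine.
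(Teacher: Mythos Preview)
Your curvature computation is right, and in fact more careful than the paper's: under a spatially constant conformal factor the Christoffel symbols of \( g(\tau)=C(\tau)\,g^{0} \) coincide with those of \( g^{0} \), so the lowered-index Ricci tensor is genuinely invariant, \( R_{ij}[g(\tau)]=R^{0}_{ij}=k\,g^{0}_{ij} \). The gap is in your final matching step: that identity does \emph{not} lead to \( \dot C=-2k/C \). Writing both sides of \( \partial_\tau g_{ij}=-2R_{ij}[g] \) against the \emph{same} basis tensor --- say \( g^{0}_{ij} \) --- gives \( \dot C\,g^{0}_{ij}=-2k\,g^{0}_{ij} \), hence \( \dot C=-2k \). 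If instead you express both sides against \( g_{ij}(\tau) \) you obtain \( (\dot C/C)\,g_{ij}=-(2k/C)\,g_{ij} \), and the \( 1/C \) cancels. The ``rescaling of the Einstein constant'' you highlight is real, but it is exactly compensated by the corresponding rescaling of \( \partial_\tau g_{ij} \) when you change basis; no \( 1/C \) nonlinearity survives.

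Your own consistency check should have flagged this: the round sphere under Ricci flow obeys \( r^{2}(\tau)=r^{2}(0)-2(n-1)\tau \), i.e.\ \( C(\tau)=C(0)-2k\tau \) \emph{linearly}, not \( C^{2}(\tau)=C(0)^{2}-4k\tau \). The paper reaches the same stated ODE by asserting \( R_{ij}[g]=\tfrac{1}{C}R^{0}_{ij} \) directly, which is false for the lowered-index tensor; you correctly avoid that mis-scaling but then reintroduce the spurious \( 1/C \) at the substitution stage by mixing the \( g^{0}_{ij} \)-coefficient on the left with the \( g_{ij}(\tau) \)-coefficient on the right. In short, your plan is sound through the curvature identity, but the last step fails precisely where you warn yourself to be ``scrupulous'' about distinguishing the scale-invariant tensor \( R_{ij} \) from the Einstein constant measured against the evolving metric.
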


\begin{proof}
Substituting \( g_{ij} = C(\tau) g_{ij}^0 \) into Ricci flow gives
\begin{equation}
    \dot{C}(\tau) g_{ij}^0 = -2 R_{ij}[g].
\end{equation}
Since \( R_{ij}[g] = \frac{1}{C} R_{ij}^0 = \frac{k}{C} g_{ij}^0 \), it follows that
\begin{equation}
    \dot{C}(\tau) g_{ij}^0 = -2 \frac{k}{C} g_{ij}^0,
\end{equation}
hence \( \dot{C} = -2k/C \).
\end{proof}

\noindent
For the normalization argument, Theorem~\ref{thm:conf-reduction} is used only through the conformal scaling laws
\begin{equation}
    R \sim \frac{1}{C},
    \qquad
    \operatorname{Vol}(M,g(\tau)) \sim C^{3/2},
\end{equation}
valid for conformal rescalings \( g_{ij}(\tau) = C(v,\tau) g_{ij}^0 \) in dimension three. Consequently, the invariant quantities
\begin{equation}
    I_1 = \int_M R\,dV, \qquad I_2 = \int_M R^2\,dV, \qquad I_3 = \int_M \|R_{ij}\|^2\,dV
\end{equation}
are explicit functions of the single scalar parameter \( C \).

\subsection*{2. Canonical Normalization Theorem}
\vspace{-0.2em}
Within the conformally homogeneous class with constant positive background curvature, the topology of \( M \) is already fixed by the Killing--Hopf theorem. The contribution of the scalar flow is therefore not to determine the topology, but to choose the canonical conformal representative.

\begin{theorem}[Canonical Normalization by Scalar Conformal Flow]
\label{thm:sufficiency}
Let \( M \) be a compact, closed, simply-connected 3-manifold equipped with a conformally homogeneous metric \( g_{ij}(\tau) = C(v,\tau)\,g_{ij}^0 \) (\( \nabla_i C = 0 \)) evolving under
\begin{equation}
\frac{\partial C}{\partial \tau} = -\alpha \frac{v^2}{c^2} (C - \pi), \label{eq:sufficient-flow}
\end{equation}
for \( v < v_c \), with \( \alpha > 0 \), and let \( g_{ij}^0 \) have constant positive curvature. Then:
\begin{enumerate}
    \item The flow is smooth for all \( \tau \geq 0 \), and \( C(v,\tau) \to \pi \) exponentially.
    \item The metric invariants converge:
    \begin{equation}
    I_1 \to 12\pi^2, \quad I_2 \to 72\pi^2, \quad I_3 \to 24\pi^2. \label{eq:invariants-limit}
    \end{equation}
    \item The limiting metric \( g_{ij}^{\infty} = \pi\,g_{ij}^0 \) is the unique conformal representative in this class for which the invariants take the values of the unit \( S^3 \), and \( M \) is diffeomorphic to \( S^3 \).
\end{enumerate}
\end{theorem}

\begin{proof}
The explicit formula for \( C(v,\tau) \) gives smooth global existence and exponential convergence to \( \pi \) for each fixed \( v < v_c \). Under the conformal ansatz \( g_{ij}(\tau)=C(v,\tau)g_{ij}^0 \) with \( \nabla_i C = 0 \), the curvature remains spatially homogeneous and the scaling laws above imply that \( I_1, I_2, I_3 \) are determined by \( C(v,\tau) \). Passing to the limit \( C(v,\tau)\to\pi \) yields the invariant values in \eqref{eq:invariants-limit}. Since the limiting metric has constant positive sectional curvature, the classical rigidity theorem for space forms implies \( M \cong S^3 \).
\end{proof}

\begin{remark}[Scope: Normalization vs Classification]
\label{rem:scope}
The assumptions \( \nabla_i C = 0 \) and constant positive background curvature already imply, via the Killing--Hopf theorem (Besse~\cite[Theorem~7.63]{Besse}; Petersen~\cite[Theorem~11.5.3]{Petersen}; cf.\ \cite{CheegerEbin}), that \( M \) is a spherical space form before the flow is started. The flow does not classify the topology; it selects the canonical metric representative. The diffeomorphism \( M \cong S^3 \) follows from rigidity, not from the flow alone.
\end{remark}

\subsection*{3. Uniqueness of the Invariant Triple}
\vspace{-0.2em}
The limiting values
\begin{equation}
    I_1 = 12\pi^2, \qquad I_2 = 72\pi^2, \qquad I_3 = 24\pi^2
\end{equation}
are the invariant values of the unit round \( S^3 \). In the present conformally homogeneous class, they determine the metric uniquely.

\begin{theorem}[Uniqueness of Metric Invariants]
\label{thm:invariants-uniqueness}
Let \( M \) be a compact, closed, simply-connected 3-manifold equipped with a conformally homogeneous metric \( g_{ij} = C\, g_{ij}^0 \) with \( \nabla_i C = 0 \), where \( g_{ij}^0 \) has constant Ricci curvature \( R_{ij}^0 = k g_{ij}^0 \) and \( k > 0 \). Then the triple
\begin{equation}
    I_1 = \int_M R \, dV, \quad
    I_2 = \int_M R^2 \, dV, \quad
    I_3 = \int_M \|R_{ij}\|^2 \, dV
    \label{eq:I1def}
\end{equation}
takes the values \( I_1 = 12\pi^2 \), \( I_2 = 72\pi^2 \), \( I_3 = 24\pi^2 \) if and only if \( M \) is isometric to the round 3-sphere \( S^3 \) of unit radius.
\end{theorem}

\begin{proof}
Under \( g_{ij} = C g_{ij}^0 \) with \( \nabla_i C = 0 \) and \( R_{ij}^0 = k g_{ij}^0 \),
\begin{equation}
    R_{ij}[g] = \frac{k}{C} g_{ij}^0 = \frac{k}{C^2} g_{ij},
\end{equation}
so the metric is Einstein. Hence
\begin{equation}
    R_{ij} = \frac{R}{3} g_{ij},
    \qquad
    \|R_{ij}\|^2 = \frac{R^2}{3},
\end{equation}
and therefore
\begin{equation}
    I_1 = R\,\operatorname{Vol}(M), \qquad
    I_2 = R^2\,\operatorname{Vol}(M), \qquad
    I_3 = \frac{R^2}{3}\,\operatorname{Vol}(M).
    \label{eq:I1vol}
\end{equation}
For the unit round \( S^3 \), one has \( R=6 \) and \( \operatorname{Vol}(S^3)=2\pi^2 \), hence
\begin{equation}
    I_1 = 12\pi^2, \qquad I_2 = 72\pi^2, \qquad I_3 = 24\pi^2.
\end{equation}
Conversely, if \( (I_1,I_2,I_3) = (12\pi^2,72\pi^2,24\pi^2) \), then \eqref{eq:I1vol} gives \( R = I_2/I_1 = 6 \) and \( \operatorname{Vol}(M) = I_1/R = 2\pi^2 \), while \( I_3 = I_2/3 \) confirms the Einstein condition. Since \( M \) is compact, simply-connected, and has constant positive sectional curvature, the Killing--Hopf theorem implies \( M \cong S^3 \).
\end{proof}

\subsection*{4. Canonical Limit}
\vspace{-0.2em}
The convergence of the invariants completes the normalization argument.

\begin{theorem}[Scalar Flow Convergence and Canonical Normalization]
\label{thm:convergence}
Let \( M \) be a compact, closed, simply-connected 3-manifold satisfying the hypotheses of Theorem~\ref{thm:sufficiency}. Under the scalar flow \( C(v,\tau) \), the metric evolves smoothly for all \( \tau \ge 0 \). As \( \tau \to \infty \), one has \( C(v,\tau) \to \pi \), the metric invariants converge to the canonical values of the unit \( S^3 \), and the flow selects \( g_{ij}^{\infty} = \pi\,g_{ij}^0 \) as the unique canonical representative within the conformally homogeneous class.
\end{theorem}

\begin{proof}
Theorem~\ref{thm:sufficiency} gives convergence \( C(v,\tau)\to\pi \). The conformal scaling laws then imply convergence of \( I_1, I_2, I_3 \) to \( 12\pi^2 \), \( 72\pi^2 \), and \( 24\pi^2 \), respectively. By Theorem~\ref{thm:invariants-uniqueness}, these values uniquely characterize the round unit \( S^3 \) in the present class, so \( g_{ij}^{\infty} = \pi\,g_{ij}^0 \) is the canonical representative. The topological conclusion follows from Remark~\ref{rem:scope}.
\end{proof}

\subsection*{5. Summary}
\vspace{-0.2em}
Within the conformally homogeneous class and constant positive background curvature, the scalar flow provides a canonical normalization mechanism rather than an independent classification theorem. Its role is to drive the conformal factor to the unique fixed point \( C = \pi \), for which the curvature invariants take the values of the unit \( S^3 \). The topological identification \( M \cong S^3 \) comes from classical space-form rigidity; the new contribution of the flow is the dynamical selection of the canonical conformal representative.

\section{DISCUSSION AND CONCLUSION}\label{sec:conclusion}
\vspace{-0.2em}
\indent
The principal contribution of this paper is a variational scalar conformal flow for the velocity-dependent deformation factor \( C(v) = \pi(1-v^2/c^2) \), together with explicit closed-form energy decay rates and a spectral characterization of the relaxation dynamics. The function \( C(v) \) is introduced as a scalar conformal factor induced by the longitudinal Lorentz contraction of spatial geometry and, under the chosen normalization, is naturally identified with \( \pi \) times the effective longitudinal contraction factor (Proposition~\ref{prop:lorentz-metric}). It satisfies \( C(0) = \pi \) at rest and \( C(c) = 0 \) as a consequence of metric degeneration at the speed of light.

The main result (Theorem~\ref{thm:main}) establishes that the total energy \( E(\tau) = \int_{-v_c}^{v_c}(C-\pi)^2\,dv \) of the variational flow decays algebraically: as \( \tau^{-1/2} \) for generic initial data, and as \( \tau^{-5/2} \) for the natural physical initial condition \( C(v,0) = C(v) \) (Proposition~\ref{prop:algebraic-decay} and Corollary~\ref{cor:physical-decay}). The general law \( E(\tau) \sim \tau^{-(2n+1)/2} \), where \( n \) is the vanishing order of the initial deviation at \( v = 0 \), was derived via explicit Gaussian integration and interpreted spectrally (Section~\ref{sec:energy}, Subsection~6): the algebraic decay reflects the gapless continuous spectrum of the relaxation operator \( k(v) = \alpha v^2/c^2 \), whose spectral measure accumulates as \( k^{-1/2}\,dk \) near \( k = 0 \) --- the same mechanism responsible for the \( t^{-1/2} \) decay of the heat kernel on \( \mathbb{R} \). The physical initial condition is significantly better-behaved than a generic perturbation because \( C(v,0) - \pi \sim -\pi v^2/c^2 \) vanishes quadratically at \( v = 0 \), suppressing the contribution of the slowest modes.

As a demonstration of the scalar flow framework, we applied it to the canonical normalization of compact 3-manifolds within the conformally homogeneous class (\( \nabla_i C = 0 \), constant positive background curvature). Under these assumptions the topology is already \( S^3 \) by the Killing--Hopf theorem \cite{Besse,Petersen}; the flow's contribution is to select \( C = \pi \) as the unique conformal representative for which the curvature invariants \( (I_1,I_2,I_3) = (12\pi^2,72\pi^2,24\pi^2) \) match those of the unit \( S^3 \) (Theorems~\ref{thm:sufficiency} and~\ref{thm:convergence}). The identification \( M \cong S^3 \) then follows from space-form rigidity \cite{Petersen,Besse}.

Several directions remain open. First, the relaxation parameter \( \tau \) is currently dimensionless, with characteristic scale \( \tau_{\mathrm{relax}}(v) = c^2/(\alpha v^2) \). Connecting \( \tau \) to a physically measurable quantity --- proper time, an entropy parameter, or a renormalization group scale --- would give the decay rates \( \tau^{-1/2} \) and \( \tau^{-5/2} \) observable physical content. Second, the supercritical regime (\( v \geq v_c \)) is governed by a tilted potential with an undetermined parameter \( K \). The present variational framework does not fix \( K \); determining it would require additional physical or geometric input, for example by matching the supercritical extension to the Ricci reduction \( \dot{C} = -2k/C \) in the compressed phase. Third, the canonical normalization result is restricted to the conformally homogeneous class. Allowing a spatially varying conformal factor \( C = C(x,\tau) \) leads to a fully tensorial PDE which reduces, in the scalar curvature direction, to the Yamabe flow \cite{Schoen1984,Ye1994}; establishing canonical normalization in that generality is deferred to future work. Fourth, the dimension-three restriction could be lifted, but higher-dimensional analogues would require accounting for the Weyl tensor and the absence of a direct Killing--Hopf classification.

The results of this paper establish the scalar conformal flow as a well-defined geometric object with explicit, computable dynamics. The connection between Lorentz-contracted metric geometry, variational flow, algebraic energy decay, and canonical normalization of 3-manifolds is the main structural contribution. The Gaussian-plus-Tauberian proof technique applied here to the degenerate relaxation operator may be of independent interest for other kinetic or geometric flows whose relaxation rates accumulate at zero.

\clearpage

\end{document}